		\def\MODE{1}
		\def\CDC{1}
	\pgfplotsset{width=10cm,height=7.5cm,compat=newest}
	\pgfplotsset{width=6.5cm,height=5.5cm,compat=newest}
\tikzset{dashdot/.style={dash pattern=on 1pt off 2.5pt on 4.5pt off 2.5pt}}
\pgfplotsset{
  tick label style = {font=\footnotesize},
  every axis label = {font=\footnotesize},
  legend style = {font=\footnotesize},
  label style = {font=\normalsize}
}
\newtheorem{thm}{Theorem}
\newtheorem{assumption}[thm]{Assumption}
\newtheorem{prop}[thm]{Proposition}
\renewenvironment{proof}{{\noindent\bf Proof.}}{ \hfill ~\qed}
\def\qed{\rule[0pt]{5pt}{5pt}\par\medskip}
\newcommand{\tp}{\mathsf{T}}								
\newcommand{\bmat}[1]{\begin{bmatrix}#1\end{bmatrix}}	
\newcommand{\bmatshort}[1]{\left[ \begin{smallmatrix}#1\end{smallmatrix} \right]}		
\DeclareMathOperator{\diag}{\mathrm{diag}}
\DeclareMathOperator*{\argmin}{\arg\min}
\begin{document}
\if\CDC1\else\begin{frontmatter}\fi
\if\CDC2
	\title{Compositional Performance Certification of Interconnected\\
			 Systems using ADMM\tnoteref{t1}}
	
	\author[mech]{Chris~Meissen\corref{cor1}}
	\ead{cmeissen@berkeley.edu}
	\author[mech]{Laurent~Lessard}
	\ead{lessard@berkeley.edu}
	\author[eecs]{Murat~Arcak}
	\ead{arcak@berkeley.edu}
	\author[mech]{Andrew~Packard}
	\ead{apackard@berkeley.edu}
	
	\tnotetext[t1]{This work was supported in part by NSF under grant EECS-1405413 and NASA under Grant No.~NRA~NNX12AM55A entitled ``Analytical Validation Tools for Safety Critical Systems Under Loss-of-Control Conditions'', Dr. Christine Belcastro technical monitor. Any opinions, findings, and conclusions or recommendations expressed in this material are those of the author and do not necessarily reflect the views of the NSF or NASA.}
	
	\cortext[cor1]{Corresponding author}
	\address[mech]{Department of Mechanical Engineering, University of California, Berkeley, CA~94720, USA.}
	\address[eecs]{Department of Electrical Engineering, University of California, Berkeley, CA~94720, USA.}
\else
	\title{Compositional Performance Certification of \\Interconnected
			Systems using ADMM}
	\author{Chris Meissen \quad Laurent Lessard \quad Murat Arcak \quad Andrew Packard}
	\note{Submitted to Automatica}
	\maketitle
\fi

\begin{abstract}
A compositional performance certification method is presented for interconnected systems using subsystem dissipativity properties and the interconnection structure. A large-scale optimization problem is formulated to search for the most relevant dissipativity properties. The alternating direction method of multipliers (ADMM) is employed to decompose and solve this problem, and is demonstrated on several examples.
\end{abstract}

\if\CDC2
\begin{keyword}
compositional analysis \sep interconnected systems \sep ADMM \sep dissipative dynamical systems
\end{keyword}
\end{frontmatter}
\fi

\section{Introduction}

In this paper, compositional analysis is used to certify performance of an interconnection of subsystems as depicted in Figure~\ref{fig:IntSysIO}. The $G_i$ blocks are known subsystems mapping $u_i \mapsto y_i$ and $M$ is a static matrix that characterizes the interconnection topology. The goal of compositional analysis is to establish properties of the interconnected system using only properties of the subsystems and their interconnection. Henceforth, the term ``local'' is used to refer to properties or analysis of individual subsystems in isolation. Likewise, ``global'' refers to the entire interconnected system.

\begin{figure}[ht]
\centering
\begin{tikzpicture}[thick,auto,>=latex,node distance=1.6cm]
\tikzstyle{block}=[draw,rectangle,minimum width=2.5em, minimum height=2.5em]
\def\x{1.0}		
\def\y{0.2}		
\def\z{1.4}		
\node [block](G){$M$};
\node [block,above of=G](D){$\addtolength{\arraycolsep}{-0.2em}\begin{matrix}
\,{G}_1 & &\\[-0.15em]&\ddots&\\[-0.15em]& & {G}_N \end{matrix}$};
\draw [<-] (G.east)+(0,\y) -- +(\x,\y) |- node[swap,pos=0.25]{$y$} (D);
\draw [<-] (G.east)+(0,-\y) -- +(\z,-\y) node [anchor=west]{$d$};
\draw [->] (G.west)+(0,\y) -- +(-\x,\y) |- node[pos=0.25]{$u$} (D);
\draw [->] (G.west)+(0,-\y) -- +(-\z,-\y) node [anchor=east]{$e$};
\end{tikzpicture}
\caption{Interconnected system with input $d$ and
 output $e$.\label{fig:IntSysIO}}
\end{figure}
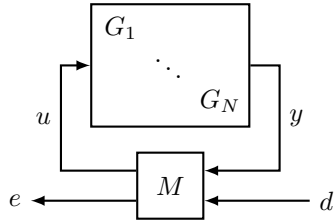

Local behavior and global performance are cast and quantified in the framework of dissipative systems~\cite{willems72}; specifically the case with quadratic supply rates. The global supply rate is specified by the analyst and dictates the system performance that is to be verified. For example, supply rates can be chosen to characterize $L_2$-gain, passivity, output-strict passivity, etc., for the input-output pair ($d$,$e$). A storage function is then sought to certify dissipativity with respect to the desired supply rate. See Section~\ref{sec:prelim} for definitions of storage functions and supply rates.

A conventional approach to compositional analysis, as presented for example  in~\cite{anderson11, dash07, sandell78, vidyasagar81, willems72}, is to establish individual supply rates (and storage functions) for which each subsystem is dissipative. Then, a storage function certifying dissipativity of the interconnected system is sought as a combination of the subsystem storage functions.

The method presented here is less conservative because the local supply rates (and storage functions) are optimized with regards to their particular suitability in certifying global properties. Thus, the local certificates are automatically generated, as opposed to being preselected.

Optimizing over the local supply rates (and storage functions) to certify stability of an interconnected system was first introduced in~\cite{ufuk09}, with the individual supply rates constrained to be diagonally-scaled induced $L_2$-norms. This perspective, coupled with dual decomposition, gave rise to a distributed optimization algorithm. We generalize this approach in several ways: certifying dissipativity (rather than stability) of the interconnected system with respect to a quadratic supply rate; searching over arbitrary quadratic supply rates for the local subsystems; and employing ADMM~\cite{boyd11} to decompose and solve the resulting problem. 

The ADMM algorithm exposes the distributed certification as a convergent negotiation between parallelizable, local  problems for each subsystem, and a global problem. Each local problem receives a proposed supply rate from the global problem and solves an optimization problem certifying dissipativity of the corresponding subsystem with a supply rate close to the proposed one. The global problem, with knowledge of the interconnection $M$ and the updated supply rates, solves an optimization problem to certify dissipativity of the interconnected system and proposes new supply rates.

 In \cite{acc14} the method presented here was applied to linear systems and ADMM was compared to other distributed optimization methods. In \cite{cdc14} this method was extended to nonlinear systems using sum-of-squares (SOS) optimization. Additionally, \cite{cdc14} generalized this approach to systems that are \emph{equilibrium-independent dissipative}~\cite{hines11}.

This paper unifies and expands on the conference papers \cite{cdc14,acc14}. A new theorem shows the proposed method is equivalent to searching for an additively separable storage function for interconnections of linear subsystems. We also demonstrate that the proposed method is tractable and more efficient for large systems than conventional techniques. An extension of the proposed method using integral quadratic constraints is included to allow frequency dependent properties of the subsystems. New examples are presented to demonstrate the results. The convergence properties of ADMM are described and shown to hold for this application.

\section{Preliminaries}
\label{sec:prelim}

\paragraph{Dissipative dynamical systems \cite{willems72}} Consider a time-invariant dynamical system:
\begin{equation} \label{eq:Sys}
\begin{aligned}
\dot{x}(t) &= f(x(t),u(t)),		& f(0,0) &= 0 \\
y(t) &= h(x(t),u(t)),				& h(0,0) &= 0
\end{aligned}
\end{equation}
with $x(t) \in \mathbb{R}^n$, $u(t) \in \mathbb{R}^m$, $y(t) \in \mathbb{R}^p$. 
A \emph{supply rate} is a function $w: \mathbb{R}^m \times \mathbb{R}^p \to \mathbb{R}$. A system of the form~\eqref{eq:Sys} is \emph{dissipative} with respect to a supply rate~$w$ if there exists a differentiable, nonnegative function $V: \mathbb{R}^n \to \mathbb{R}_+$ such that $V(0) = 0$ and
 \begin{align}\label{eq:DIE}
\nabla V(x)^\tp f(x, u) - w(u,h(x,u)) \leq 0
\end{align}
for all $x \in \mathbb{R}^n$ and $u \in \mathbb{R}^m$. Equation~\eqref{eq:DIE} is referred to as the \emph{dissipation inequality} and $V$ as a \emph{storage function}. 

\paragraph{Equilibrium-independent dissipative (EID) systems \cite{burger13, hines11}}
Consider a system of the form
 \begin{equation} \label{eq:SysEID}
 \begin{aligned}
 \dot{x}(t) = f(x(t),u(t)) \qquad
 y(t) = h(x(t),u(t))
 \end{aligned}
 \end{equation}
where there exists a nonempty set $\mathcal{X}^\star \subseteq \mathbb{R}^n$ such that for each $x^\star \in \mathcal{X}^\star$ there exists a unique $u^\star \in \mathbb{R}^m$ such that $f(x^\star,u^\star) = 0$. The \emph{equilibrium state-input map} is then defined as 
 \begin{equation*}
 k_u(x):\mathbb{R}^n \rightarrow \mathbb{R}^m \text{ such that } u^\star = k_u(x^\star).
 \end{equation*}
 The system~\eqref{eq:SysEID} is \emph{EID} with respect to a supply rate $w$ if there exists a nonnegative storage function $V: \mathbb{R}^{2n} \rightarrow \mathbb{R}_+$ such that $V(x^\star,x^\star) = 0$ and 
\begin{align} \label{eq:DIEeid}
\begin{split}
\nabla_{x}V(x,x^\star)^\tp f(x,u)-w(u-u^\star,y-y^\star) \leq 0
\end{split}
\end{align}
for all $x^\star \in \mathcal{X}^\star$, $x \in \mathbb{R}^n$, and $u \in \mathbb{R}^m$ where $u^\star=k_u(x^\star)$, $y = h(x,u)$, and $y^\star = h(x^\star, u^\star)$.

This definition ensures dissipativity with respect to any possible equilibrium point rather than a particular point. This is advantageous for compositional analysis, since the equilibrium of an interconnection may be hard to compute.

\paragraph{Integral Quadratic Constraints (IQCs)~\cite{megretski97}} 
IQCs are a generalization of the dissipativity framework that capture frequency dependent properties of a system. Let $(\hat A,\hat B,\hat C,\hat D)$ be the realization of a stable LTI system~$\Psi$ with state $\eta$ and $X$ be a real symmetric matrix.  Then \eqref{eq:Sys} satisfies the IQC defined by~$\Pi = \Psi^* X \Psi$ if there exists a nonnegative storage function $V(x, \eta)$ such that $V(0,0) = 0$ and 
\begin{align}\label{eq:DIEiqc}
\nabla_x V&(x, \eta)^\tp  f(x,u) + \nabla_{\eta} V(x, \eta)^\tp 
			 \left( \hat A \eta + \hat B \bmat{ u \\ y } \right) \notag \\
			&\leq 
			 \left( \hat C \eta + \hat D \bmat{ u \\ y } \right)^\tp X 
			 \left( \hat C \eta + \hat D \bmat{ u \\ y } \right)
\end{align}
for all $x \in \mathbb{R}^n$ and $u\in \mathbb{R}^m$ where $y = h(x,u)$. In fact, more is true, \eqref{eq:DIEiqc} implies that for all $u \in  L_{2e}$, the space of signals that are square integrable on all finite intervals, the signal $z := \Psi \bmatshort{u \\y}$ satisfies $\int_0^T z^\tp X z dt \geq 0$ for all $T>0$ with $x(0) = 0$ and $\eta(0) = 0$. Dissipativity is recovered when $\Psi = I_{m+p}$.

\paragraph{SOS programming \cite{parillo00}}
For polynomial systems certifying dissipativity can be relaxed to a semidefinite program (SDP) searching for storage functions that are SOS polynomials. 

 Suppose that $f$ and $h$ in \eqref{eq:Sys} are polynomials. Let $\mathbb{R}[x]$ ($\Sigma[x]$) be the set of polynomials (SOS polynomials) in $x$. Then certification of dissipativity with respect to a polynomial supply rate, $w$, can be relaxed to the SOS feasibility program:
\begin{align}
\label{eq:SOSDiss}
\begin{aligned}
V(x) &\in \Sigma[x] \\
-\nabla V(x)^\tp f(x,u) + w(u,y) &\in \Sigma[x,u].
\end{aligned}
\end{align}
Similarly, as presented in \cite{hines11}, certifying polynomial systems are EID can be relaxed to:
\begin{align}\label{eq:SOSDissEID}
\begin{aligned}
V(x,x^\star) &\in \Sigma[x, x^\star] \\
r(x, u, x^\star, u^\star) &\in \mathbb{R}[x,u,x^\star,u^\star] \\
-\nabla_x V(x,x^\star)^\tp f(x,u) + w(u-u^\star,y-y^\star) \hspace{-20mm}& \\
+\,r(x, u, x^\star, u^\star)f(x^\star, u^\star) &\in \Sigma[x,u, x^\star, u^\star].
\end{aligned}
\end{align}
If each state has rational polynomial dynamics,
\begin{equation*}
\dot{x}_i = f_i(x,u) = \frac{p_i(x,u)}{q_i(x,u)} \quad \text{ for } i = 1, \dots, n
\end{equation*}
where $p_i \in \mathbb{R}[x,u]$ and  $q_i - \epsilon \in \Sigma[x,u]$ for  $\epsilon > 0$, then certifying dissipativity of the system with respect to a polynomial supply rate, $w$, can be relaxed to:
\if\MODE1
\begin{align}
\label{eq:ratSOSDiss}
\begin{aligned}
V(x) &\in \Sigma[x] \\[-5pt]
-\sum_{i=1}^n\nabla_{x_i} V(x) p_i(x,u) \prod_{j \neq i} q_j(x,u)+\prod_{i=1}^n q_i(x,u) w(u,y) &\in \Sigma[x, u].
\end{aligned}
\end{align}
\else
\begin{align}
\label{eq:ratSOSDiss}
\begin{aligned}
V(x) &\in \Sigma[x] \\[-5pt]
-\sum_{i=1}^n\nabla_{x_i} V(x) p_i(x,u) \prod_{j \neq i} q_j(x,u)\hspace{3mm}& \\[-5pt]
+\prod_{i=1}^n q_i(x,u) w(u,y) &\in \Sigma[x, u].
\end{aligned}
\end{align}
\fi
Similarly to the polynomial case, certifying rational polynomial systems are EID can also be formulated as an SOS feasibility program. Furthermore, certifying a polynomial or rational polynomial system satisfies an IQC can be formulated as a SOS feasibility program.

 Consider the interconnected system in Figure~\ref{fig:IntSysIO} where the subsystems $G_1,\dots,G_N$ are known and have dynamics of the form~\eqref{eq:Sys}. Each subsystem $G_i$ is characterized by $(f_i,h_i)$ with $x_i(t)\in\mathbb{R}^{n_i}$, $u_i(t)\in\mathbb{R}^{m_i}$, and $y_i(t)\in\mathbb{R}^{p_i}$. Define $n:= n_1+\dots+n_N$.
 
 The static interconnection $M \in \mathbb{R}^{m \times p}$ relates
\begin{equation}\label{eq:M_interconnection}
\bmat{ u \\ e } = M\bmat{y \\ d } 
\end{equation}
where $d(t) \in \mathbb{R}^{p_{d}}$, $e(t) \in \mathbb{R}^{m_{e}}$, $m = m_1 +\dots+ m_{N}+m_e$, and  $p = p_1 +\dots+ p_{N}+p_d$. We assume the interconnection is well-posed: for any $d \in L_{2e}$ and initial condition $x(0)\in\mathbb{R}^n$ there exist unique $e, u, y \in L_{2e}$ that causally depend on $d$.

The global and local supply rates are assumed to be quadratic forms. In particular, the global supply rate, which is specified by the analyst, is
\begin{equation}\label{eq:quadratic_supply_rates}
\bmat{d \\ e}^\tp W \bmat{d \\ e} 
\end{equation}
where $W$ is a real symmetric matrix. The local supply rates are 
\begin{equation}
\bmat{ u_i \\ y_i }^\tp X_i \bmat{ u_i \\ y_i } 
\end{equation}
where $X_i$ are real symmetric matrices. 

Certifying dissipativity of the interconnected system is a feasibility problem. In order to state this problem we first define the \emph{local} constraint sets as
\if\MODE1
	\begin{align}\label{def:Li}
	\mathcal{L}_i &:= \Biggl\{X_i \,\Bigg|\, \text{ the $i$-th subsystem is dissipative w.r.t. } \bmat{ u_i \\  y_i }^\tp X_i \bmat{ u_i \\ y_i } \Biggr\}
	\end{align}
\else
	\begin{align}\label{def:Li}	
	\mathcal{L}_i &:= \Biggl\{X_i \,\Bigg|\, \text{ the $i$-th subsystem is dissipative} \notag\\ & \text{w.r.t. the supply rate } \bmat{ u_i \\  y_i }^\tp X_i \bmat{ u_i \\ y_i } \Biggr\}
	\end{align}
\fi
and the \emph{global} constraint set as
\begin{align} \label{def:G}
	&\mathcal{G} := \Biggl\{ X_1, \dots, X_N  \,\Bigg|\,
	 \bmat{M \\ I_p}^\tp\!\!\! P_\pi^\tp Q P_\pi \bmat{M \\ I_p}\preceq 0 \Biggr\}
\end{align}
where $Q = \diag(X_1, \dots, X_N, -W)$ and $P_\pi$ is a permutation matrix defined by
\begin{align*}
\bmat{ u_1 \\ y_1 \\ \vdots \\ u_N \\ y_N \\ d \\ e} = P_\pi \bmat{u \\ e \\ y \\ d}. 
\end{align*}
 
The following elementary proposition gives conditions for certifying dissipativity of the interconnected system. This proposition is a direct extension of previous results given in~\cite{moylan78, vidyasagar81}.

\begin{prop}\label{prop:generalized_inequality}
Consider $N$ subsystems interconnected according to~\eqref{eq:M_interconnection}, with a global supply rate of the form~\eqref{eq:quadratic_supply_rates}. If there exist $X_1,\dots,X_N$ satisfying
\begin{align}
\label{eq:primal}
\begin{aligned}
 & X_i \in \mathcal{L}_i\quad\text{for }i=1,\dots,N \\
& (X_1,\dots,X_N) \in \mathcal{G}
\end{aligned}
\end{align}
then the interconnected system is dissipative with respect to the global supply rate. A storage function certifying global dissipativity is $V(x_1,\dots,x_N) := \sum_{i=1}^N V_i(x_i)$ where each $V_i(x_i)$ is a storage function certifying local dissipativity according to~\eqref{def:Li}.
\end{prop}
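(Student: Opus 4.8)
The plan is to verify directly that the additively separable function $V(x_1,\dots,x_N) := \sum_{i=1}^N V_i(x_i)$, assembled from the local storage functions supplied by the membership $X_i \in \mathcal{L}_i$, is itself a valid storage function for the interconnected system with respect to the global supply rate~\eqref{eq:quadratic_supply_rates}. Differentiability, nonnegativity, and $V(0)=0$ are immediate, since each $V_i$ inherits these properties from the definition of $\mathcal{L}_i$ in~\eqref{def:Li} and the underlying notion of dissipativity; so the only real content is the dissipation inequality~\eqref{eq:DIE} for the interconnection.

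To that end, I would first fix a state $x = (x_1,\dots,x_N)$ and an external input $d$, and invoke the well-posedness assumption to obtain the unique signals $u,y,e$ consistent with $y_i = h_i(x_i,u_i)$ for each $i$ and with the interconnection relation $\bmat{u\\e} = M\bmat{y\\d}$ of~\eqref{eq:M_interconnection}. Summing the local dissipation inequalities~\eqref{eq:DIE} over $i$ then gives, along any trajectory of the interconnection,
\begin{align*}
\nabla V(x)^\tp \dot{x} = \sum_{i=1}^N \nabla V_i(x_i)^\tp f_i(x_i,u_i) \;\le\; \sum_{i=1}^N \bmat{u_i\\y_i}^\tp X_i \bmat{u_i\\y_i}.
\end{align*}

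The central step is the algebraic identity that rewrites this bound in terms of the global constraint matrix. Stacking $\xi := \bmat{u_1\\ y_1\\ \vdots\\ u_N\\ y_N\\ d\\ e}$ and recalling $Q = \diag(X_1,\dots,X_N,-W)$, the right-hand side above equals $\xi^\tp Q \xi + \bmat{d\\e}^\tp W \bmat{d\\e}$. Using $\xi = P_\pi \bmat{u\\e\\y\\d}$ together with the fact that~\eqref{eq:M_interconnection} lets us write $\bmat{u\\e\\y\\d} = \bmat{M\\I_p}\bmat{y\\d}$, a congruence transformation yields
\begin{align*}
\xi^\tp Q \xi = \bmat{y\\d}^\tp \bmat{M\\I_p}^\tp P_\pi^\tp Q P_\pi \bmat{M\\I_p} \bmat{y\\d} \;\le\; 0,
\end{align*}
where the last inequality is precisely the semidefinite constraint defining $\mathcal{G}$ in~\eqref{def:G}, applied to the vector $\bmat{y\\d}$. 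Combining the two displays gives $\nabla V(x)^\tp \dot{x} \le \bmat{d\\e}^\tp W \bmat{d\\e}$, which is exactly~\eqref{eq:DIE} for the interconnected system with storage function $V$ and supply rate $W$.

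Since this chain of inequalities holds for every admissible $(x,d)$ and the associated $(u,y,e)$, it holds along every trajectory, so $V$ certifies global dissipativity. As the statement indicates, this proposition is elementary, and I do not anticipate a genuine obstacle; the only places requiring care are the bookkeeping with the permutation $P_\pi$---ensuring that the block-diagonal ordering of $Q$ and the partition $\bmat{y\\d}$ are consistent with the signal stacking in~\eqref{eq:M_interconnection}---and the explicit appeal to well-posedness, which is what licenses promoting the pointwise inequality to a statement about the trajectories of the interconnected system.
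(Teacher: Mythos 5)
Your proof is correct and follows essentially the same route as the paper's: both multiply the LMI defining $\mathcal{G}$ by $\bmatshort{y\\d}$ via the interconnection relation~\eqref{eq:M_interconnection} to obtain $\sum_i \bmatshort{u_i\\y_i}^\tp X_i \bmatshort{u_i\\y_i} \le \bmatshort{d\\e}^\tp W \bmatshort{d\\e}$, and then combine this with the summed local dissipation inequalities to certify the global dissipation inequality for $V=\sum_i V_i$. The only difference is the order in which the two ingredients are assembled, which is immaterial.
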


\begin{proof} Multiplying the LMI in~\eqref{def:G} on the left by $\bmatshort{y \\ d}^\tp$ and on the right by $\bmatshort{y \\ d}$ and making use of~\eqref{eq:M_interconnection}, we obtain
\begin{align} \label{eq:DissCert}
 \sum_{i=1}^N \bmat{ u_i \\ y_i }^\tp X_i
 \bmat{ u_i \\ y_i } - \bmat{ d \\ e }^\tp W \bmat{ d \\ e } \leq 0.
 \end{align} 
Since $X_i \in \mathcal{L}_i$ a storage function $V_i$ exists such that
\begin{align} \label{eq:localDIE}
\nabla V_i(x_i)^\tp f_i(x_i, u_i) - \bmat{ u_i \\ y_i }^\tp X_i \bmat{ u_i \\ y_i } \leq 0 \hspace{0.1in} 
\end{align}
for all $x_i\in \mathbb{R}^n$, $u_i \in \mathbb{R}^m$, and  $y_i = h_i(x_i,u_i)$.
Adding to~\eqref{eq:DissCert} the local dissipativity inequalities~\eqref{eq:localDIE} for each subsystem, we obtain
\begin{align} \label{eq:diss}
 \sum_{i=1}^N \nabla V_i(x_i)^\tp f_i(x_i,u_i) - \bmat{ d \\ e }^\tp W \bmat{ d \\ e } \leq 0
\end{align}
which certifies dissipativity with respect to the global supply rate.
\end{proof}

Proposition~\ref{prop:generalized_inequality} certified global dissipativity from subsystem dissipativity. The following theorem states that this compositional approach is not overly conservative: for linear systems the existence of a separable quadratic storage function for the interconnection is \emph{equivalent} to the existence of supply rates satisfying~\eqref{eq:primal}. Related results in~\cite{colgate88,kerber11} show that, under certain assumptions on the interconnection, passivity of an interconnected system is equivalent to passivity of the subsystems.  First, we make a mild assumption on the interconnection.

\begin{assumption}
\label{assum:Mind} The block diagonal elements of $M$ mapping $y_i \rightarrow u_i$ are zero and the rows of $M$ mapping $\bmatshort{y \\d} \rightarrow u_i$ are linearly independent for each $i$.
\end{assumption}
The first part of Assumption~\ref{assum:Mind} implies the subsystems don't have self-feedback loops. The second part implies that no elements of the input~$u_i$ for each subsystem are identical for all $\bmatshort{y \\ d} \in \mathbb{R}^{m}$.

\begin{thm} \label{thm:equivLin}
Consider $N$ linear subsystems
\begin{subequations}\label{subsys}
\begin{align}
\dot{x}_i &= A_ix_i + B_iu_i \\ \label{subsysb}
y_i &= C_ix_i
\end{align}
\end{subequations}
interconnected according to~\eqref{eq:M_interconnection} where $M$ satisfies Assumption~\ref{assum:Mind}. Suppose a global supply rate of the form~\eqref{eq:quadratic_supply_rates} is given. The following are equivalent:
\begin{enumerate}[(i)]
\item \label{cond1} There exists  a separable quadratic storage function of the form $V(x_1,\dots, x_N) = \sum_{i=1}^N x_i^\tp P_i x_i$ certifying dissipativity of the interconnected system.
\item \label{cond2} Each subsystem is dissipative and the associated supply rate matrices $X_1,\dots, X_N$ satisfy the global constraint $\mathcal{G}$. Dissipativity of the subsystems can be certified with storage functions of the form $V_i(x_i) = x_i^\tp P_i x_i$.
\end{enumerate} 
\end{thm}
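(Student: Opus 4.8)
The plan is to prove the two implications separately. The direction \ref{cond2}$\Rightarrow$\ref{cond1} is essentially Proposition~\ref{prop:generalized_inequality}: if each $G_i$ is dissipative with respect to $X_i$ via a storage function $V_i(x_i)=x_i^\tp P_ix_i$ and $(X_1,\dots,X_N)\in\mathcal{G}$, then $V=\sum_i x_i^\tp P_ix_i$ certifies global dissipativity and is separable and quadratic. One should also observe that a separable storage function $\sum_i x_i^\tp P_ix_i$ is automatically nonnegative, which forces each $P_i\succeq0$, so the two conditions really involve the same $P_i$'s.

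For \ref{cond1}$\Rightarrow$\ref{cond2}, the idea is to take the given separable certificate, with blocks $P_1,\dots,P_N$, and manufacture each $X_i$ as the \emph{tightest} quadratic supply rate compatible with the storage function $x_i^\tp P_ix_i$. Concretely, I would define $X_i$ through its quadratic form
\[
\bmat{u_i\\y_i}^\tp X_i\bmat{u_i\\y_i} := \sup\bigl\{\,2x_i^\tp P_i(A_ix_i+B_iu_i)\ :\ C_ix_i=y_i\,\bigr\}.
\]
With this definition $X_i\in\mathcal{L}_i$ is immediate: given $x_i$ and $u_i$, taking $y_i=C_ix_i$ places $x_i$ in the feasible set of the supremum, so $2x_i^\tp P_i(A_ix_i+B_iu_i)\le\bmat{u_i\\C_ix_i}^\tp X_i\bmat{u_i\\C_ix_i}$, which is exactly the dissipation inequality~\eqref{eq:localDIE} with $V_i(x_i)=x_i^\tp P_ix_i$.

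That $(X_1,\dots,X_N)\in\mathcal{G}$ then follows by a ``reassembly'' argument using only the static, linear nature of the interconnection. Fix any $(y,d)$ and let $u$ and $e$ be the signals it produces through $M$ as in~\eqref{def:G}. For each $i$, choose $\bar x_i$ attaining the supremum that defines $\bmat{u_i\\y_i}^\tp X_i\bmat{u_i\\y_i}$, so $C_i\bar x_i=y_i$. Setting $\bar x=(\bar x_1,\dots,\bar x_N)$, the stacked output equals $y$, so the closed-loop input fed to subsystem $i$ at state $\bar x$ with external input $d$ is exactly this $u_i$ and the closed-loop external output is exactly $e$. Evaluating the global dissipation inequality~\eqref{eq:diss} (with $V=\sum_i x_i^\tp P_ix_i$) at $(\bar x,d)$ yields $\sum_i 2\bar x_i^\tp P_i(A_i\bar x_i+B_iu_i)\le\bmat{d\\e}^\tp W\bmat{d\\e}$, whose left-hand side equals $\sum_i\bmat{u_i\\y_i}^\tp X_i\bmat{u_i\\y_i}$ by the choice of $\bar x_i$; this is precisely the inequality in~\eqref{def:G}.

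The main obstacle is well-definedness: one must show the suprema above are finite---so they are genuine quadratic forms, hence bona fide symmetric matrices $X_i$---and attained. This is where Assumption~\ref{assum:Mind} is used. Because the diagonal block of $M$ from $y_i$ to $u_i$ vanishes, subsystem $i$ can be excited in isolation: setting $x_j=0$ for $j\ne i$ in the global dissipation inequality and then $d=0$ gives $A_i^\tp P_i+P_iA_i\preceq C_i^\tp\Omega_iC_i$ for a fixed matrix $\Omega_i$, whence $x_i^\tp(A_i^\tp P_i+P_iA_i)x_i\le0$ and $(A_i^\tp P_i+P_iA_i)x_i=0$ for all $x_i\in\ker C_i$. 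The row-independence part of Assumption~\ref{assum:Mind} then lets one realize any input $u_i$ by exciting the remaining subsystems \emph{without} disturbing $y_i=C_ix_i$; a linear-in-$t$ argument along $x_i\mapsto x_i+tv$ with $v\in\ker C_i$ forces $B_i^\tp P_iv=0$ whenever $(A_i^\tp P_i+P_iA_i)v=0$. These facts make the maximized quadratic concave on $\{C_ix_i=y_i\}$ with its linear term in the range of its quadratic part, so the supremum is finite and attained. A secondary point needing care is when $C_i$ is not surjective, since $y_i$ ranges over all of $\mathbb{R}^{p_i}$ in~\eqref{def:G} while the supremum is naturally defined only on $\operatorname{range}C_i$; this is handled either by reducing to the full-row-rank case at the outset, or by choosing the components of $X_i$ outside $\operatorname{range}C_i$---which are immaterial for~\eqref{eq:localDIE}---to keep~\eqref{def:G} satisfied. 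An alternative to the explicit construction is a theorem-of-alternatives argument for the feasibility system~\eqref{eq:primal}, in which any dual certificate of infeasibility would contradict the global dissipation inequality; that route is slicker in principle but requires attention to SDP duality gaps.
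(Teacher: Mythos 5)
Your proof takes essentially the same route as the paper's: the paper likewise defines $X_i$ as the maximum of $\dot V_i$ over states consistent with $(u_i,y_i)$ (computed in coordinates with $C_i=[0\;\;I]$ via the Schur complement $X_i=S_i-\hat R_i^\tp\hat Q_i^{-1}\hat R_i$), uses Assumption~\ref{assum:Mind} exactly as you do to kill the linear terms along the null directions of the quadratic part so that the maximum is finite and a genuine quadratic form, and obtains the global constraint by evaluating the global dissipation inequality at the maximizers. The only slip is your parenthetical claim that $(A_i^\tp P_i+P_iA_i)x_i=0$ for all $x_i\in\ker C_i$, which does not follow from $x_i^\tp(A_i^\tp P_i+P_iA_i)x_i\le 0$ on $\ker C_i$ and is not needed; your subsequent linear-in-$t$ argument on the degenerate directions is what actually carries that step.
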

\begin{proof} 
See appendix.
\end{proof}

We next extend the results of Proposition~\ref{prop:generalized_inequality} to systems that are EID or satisfy IQCs.

\paragraph{Extension to EID systems} Assume there exists a nonempty set $\mathcal{X}^\star \subseteq \mathbb{R}^n$ such that for each $x^\star \in \mathcal{X}^\star$ there is a unique $d^\star$ such that $f_i(x_i^\star, u_i^\star) = 0$ where $y_i^\star = h_i(x_i^\star, u_i^\star)$ and $\bmat{u^\star \\ e^\star} = M \bmat{y^\star \\ d^\star}$.

The global and local supply rates must be modified to depend on $u^\star$, $y^\star$, $d^\star$, and $e^\star$. Specifically, the global supply rate becomes
\begin{equation}\label{eq:quadratic_supply_ratesEID}
\bmat{d-d^\star \\ e-e^\star}^\tp W \bmat{d-d^\star \\ e-e^\star}
\end{equation}
and the local supply rates are
\begin{equation}
\bmat{ u_i-u_i^\star \\ y_i-y_i^\star }^\tp X_i \bmat{ u_i-u_i^\star \\ y_i-y_i^\star }
\end{equation}
where $W$ and $X_i$ are real symmetric matrices. 

For each subsystem we must determine a supply rate $X_i$ such that the subsystem is EID. Therefore, the local constraints sets are
\if\MODE1
	\begin{align}\label{def:LiEID}
	\mathcal{L}_i &:= \Bigl\{X_i \,\Big|\, \text{ the $i$-th subsystem is EID w.r.t. } \bmat{ u_i-u_i^\star \\  y_i-y_i^\star }^\tp X_i \bmat{ u_i-u_i^\star \\ y_i-y_i^\star } \Biggr\}.
	\end{align}
\else
	\begin{align}\label{def:LiEID}	
	\mathcal{L}_i &:= \Biggl\{X_i \,\Bigg|\, \text{ the $i$-th subsystem is EID w.r.t. the} \notag\\ & \text{supply rate } \bmat{ u_i-u_i^\star \\  y_i-y_i^\star }^\tp X_i \bmat{ u_i-u_i^\star \\ y_i-y_i^\star } \Biggr\}.
	\end{align}
\fi
The global constraint set $\mathcal{G}$ is unchanged.

Proposition~\ref{prop:generalized_inequality} is directly applicable to this formulation, thus a feasible solution to~\eqref{eq:primal} with the local constraint sets defined as in~\eqref{def:LiEID} implies that the interconnected system is EID with respect to~\eqref{eq:quadratic_supply_ratesEID}.

\paragraph{Extension to IQCs}
 We extend this approach to IQCs by redefining the local constraint sets as
\begin{align}\label{eq:LiIQC}
	\mathcal{L}_i := &\Big\{X_i \, \Big|\, \exists V_i(x_i,\eta_i) \succeq 0 \text{ s.t. \eqref{eq:DIEiqc} holds} \Big\}
\end{align}
where $\eta_i$ is the state of the stable linear system $\Psi_i$ specified by the analyst and the symmetric matrix $X_i$ is a decision variable. Let $(\hat{A}_i, \hat{B}_i, \hat{C}_i, \hat{D}_i)$ be a state space realization of $\Psi_i$.

Rather than certify performance with respect to a global supply rate, we generalize this to certifying performance with respect to a global IQC of the form $\Pi_\textsc{w} := \Psi_\textsc{w}^* W \Psi_\textsc{w}$ where $\Psi_\textsc{w}$ is a stable linear system with realization $(\hat{A}_\textsc{w}, \hat{B}_\textsc{w}, \hat{C}_\textsc{w}, \hat{D}_\textsc{w})$ and $W$ is a real symmetric matrix, both specified by the analyst. 

Let $\bmatshort{y \\ d}$ be the input to a stable linear system $\Psi$ with the state space realization:
\begin{align*}
\hat{A} &:= \diag(\hat{A}_1, \dots, \hat{A}_N, \hat{A}_\textsc{w}) \\[-1mm]
\hat{B} &:= \diag(\hat{B}_1, \dots, \hat{B}_N, \hat{B}_\textsc{w})P_\pi \bmat{M\\I} \\[-1mm]
\hat{C} &:= \diag(\hat{C}_1, \dots, \hat{C}_N, \hat{C}_\textsc{w}) \\[-1mm]
\hat{D} &:= \diag(\hat{D}_1, \dots, \hat{D}_N, \hat{D}_\textsc{w})P_\pi \bmat{M\\I}.
\end{align*}
Then the global constraint set is defined as 
\begin{align} \label{def:GIQC}
\mathcal{G} := & \Biggl\{ X_1, \dots, X_N \,\Bigg|\,
	\exists \, P \succeq 0 \text{ such that } \\
&\bmat{\hat{A}^\tp P + P \hat{A} & P \hat{B} \\ \hat{B}^\tp P & 0} + \bmat{\hat{C}^\tp \\ \hat{D}^\tp} Q \bmat{\hat{C}^\tp \\ \hat{D}^\tp}^\tp \preceq 0 \Biggr\} \notag
\end{align}
 where $Q = \diag(X_1, \dots, X_N, -W)$, as in~\eqref{def:G}. The local and global constraints simplify to~\eqref{def:Li} and~\eqref{def:G}, respectively, if the IQCs are static ($\hat{A}$, $\hat{B}$, and $\hat{C}$ are of dimension zero) and $\hat{D} =P_\pi \bmatshort{M \\I}$.

\begin{prop}\label{prop:iqc}
Suppose that $X_1,\dots,X_N$ satisfy~\eqref{eq:primal} with the local and global constraints sets defined as \eqref{eq:LiIQC}-\eqref{def:GIQC}. Then the interconnected system satisfies the global IQC $\Pi_\textsc{w}$.
\end{prop}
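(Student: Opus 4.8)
The plan is to reproduce the argument of Proposition~\ref{prop:generalized_inequality}, with the static supply-rate certificates replaced by the dynamic IQC certificates. Fix an input $d$; by well-posedness this determines $u,y,e$. Drive each local filter $\Psi_i$ by $\bmatshort{u_i \\ y_i}$ and the global filter $\Psi_\textsc{w}$ by $\bmatshort{d \\ e}$, and let $\eta_i$, $\eta_\textsc{w}$ be the resulting filter states and $z_i := \hat C_i\eta_i + \hat D_i\bmatshort{u_i \\ y_i}$, $z_\textsc{w} := \hat C_\textsc{w}\eta_\textsc{w} + \hat D_\textsc{w}\bmatshort{d \\ e}$ their outputs; write $\eta := (\eta_1,\dots,\eta_N,\eta_\textsc{w})$. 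The first step is to verify, directly from the definitions of $\hat A,\hat B,\hat C,\hat D$ together with the interconnection identity~\eqref{eq:M_interconnection}, that
\begin{align*}
\hat A\eta + \hat B\bmat{y \\ d} = \dot\eta, \qquad \hat C\eta + \hat D\bmat{y \\ d} = \bmat{z_1 \\ \vdots \\ z_N \\ z_\textsc{w}};
\end{align*}
the factor $P_\pi\bmatshort{M \\ I}$ inside $\hat B$ and $\hat D$ is exactly what routes $\bmatshort{y \\ d}$ into the local inputs $\bmatshort{u_i \\ y_i}$ and into $\bmatshort{d \\ e}$.

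Next I would multiply the LMI defining $\mathcal{G}$ in~\eqref{def:GIQC} on the left by $\bmat{\eta \\ y \\ d}^\tp$ and on the right by $\bmat{\eta \\ y \\ d}$. Using $Q = \diag(X_1,\dots,X_N,-W)$ together with the two identities above, the $P$ term contributes $2\eta^\tp P\dot\eta = \tfrac{d}{dt}(\eta^\tp P\eta)$ and the $Q$ term contributes $\sum_{i=1}^N z_i^\tp X_i z_i - z_\textsc{w}^\tp W z_\textsc{w}$, so that
\begin{align*}
\tfrac{d}{dt}(\eta^\tp P\eta) + \sum_{i=1}^N z_i^\tp X_i z_i - z_\textsc{w}^\tp W z_\textsc{w} \le 0,
\end{align*}
which is the analogue of~\eqref{eq:DissCert}. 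Since $X_i\in\mathcal{L}_i$ as in~\eqref{eq:LiIQC}, for each $i$ there is a nonnegative storage function $V_i(x_i,\eta_i)$ with $V_i(0,0)=0$ satisfying~\eqref{eq:DIEiqc}, which along the trajectory reads $\tfrac{d}{dt}V_i(x_i,\eta_i) \le z_i^\tp X_i z_i$; this is the analogue of~\eqref{eq:localDIE}.

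Adding the $N$ local inequalities to the global one, the terms $z_i^\tp X_i z_i$ cancel and what remains is
\begin{align*}
\tfrac{d}{dt}\Bigl(\textstyle\sum_{i=1}^N V_i(x_i,\eta_i) + \eta^\tp P\eta\Bigr) \le z_\textsc{w}^\tp W z_\textsc{w}.
\end{align*}
Every inequality used here is pointwise in the augmented state $(x_1,\dots,x_N,\eta_1,\dots,\eta_N,\eta_\textsc{w})$ and the input $d$, so the nonnegative function $\tilde V := \sum_{i=1}^N V_i(x_i,\eta_i) + \eta^\tp P\eta$, which vanishes at the origin because each $V_i(0,0)=0$ and $P\succeq 0$, is a storage function of the form required by~\eqref{eq:DIEiqc} certifying that the interconnection, augmented with the local filters $\Psi_1,\dots,\Psi_N$, satisfies the global IQC $\Pi_\textsc{w}=\Psi_\textsc{w}^*W\Psi_\textsc{w}$; integrating from $0$ to $T$ with zero initial conditions then gives $\int_0^T z_\textsc{w}^\tp W z_\textsc{w}\,dt \ge \tilde V(T) \ge 0$, which is the claimed IQC for the interconnected system. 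I expect the only genuine obstacle to be the first step: carefully matching the block-diagonal realizations composed with $P_\pi\bmatshort{M \\ I}$ against the interconnection so that pre/post-multiplying the $\mathcal{G}$-LMI reproduces exactly $\tfrac{d}{dt}(\eta^\tp P\eta)+\sum_{i=1}^N z_i^\tp X_i z_i - z_\textsc{w}^\tp W z_\textsc{w}$. Once that book-keeping is in place, the remainder is the IQC version of the telescoping cancellation in the proof of Proposition~\ref{prop:generalized_inequality}.
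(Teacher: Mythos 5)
Your proposal is correct and follows essentially the same route as the paper's proof: define $V(\eta)=\eta^\tp P\eta$, pre/post-multiply the LMI in~\eqref{def:GIQC} by $\bmatshort{\eta\\ y\\ d}$ to obtain $\dot V+\sum_i z_i^\tp X_i z_i\le z_\textsc{w}^\tp Wz_\textsc{w}$, and add the local IQC dissipation inequalities so the $z_i^\tp X_iz_i$ terms cancel. Your extra bookkeeping on the $P_\pi\bmatshort{M\\ I}$ routing and the final integration step are just more explicit versions of what the paper leaves implicit.
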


\begin{proof} Let $\eta$ be the state and $z$ the output of $\Psi$. Defining $V(\eta) := \eta^\tp P\eta$ gives
\begin{align*}
\dot{V} (\eta,y,d) =   \bmat{\eta \\ y \\ d}^\tp \bmat{\hat{A}^\tp P + P\hat{A} & P\hat{B} \\ \hat{B}^\tp P & 0} \bmat{\eta \\ y \\ d}.
\end{align*}
The LMI in \eqref{def:GIQC} is equivalent to
\begin{align} \label{eq:iqcDIEinZ}
\dot{V} (\eta,y,d) + z^\tp Q z \leq 0
\end{align}
 for all $\eta$, $y$, and $d$ with $z = \hat{C}\eta+\hat{D}\bmatshort{y\\d}$. We partition $z$ such that $z_i$ is the output of $\Psi_i$ and $z_\textsc{w}$ is the output of $\Psi_\textsc{w}$. Then \eqref{eq:iqcDIEinZ} becomes
\begin{align} \label{eq:iqcDIEglob}
\dot{V} (\eta,y,d) + \sum_{i=1}^N z_i^\tp X_i z_i \leq z_\textsc{w}^\tp W z_\textsc{w}.
\end{align} 

Since each subsystem satisfies an IQC $\Pi_i$, we have $\dot{V}(x_i, \eta_i, u_i, y_i) \leq z_i^\tp X_i z_i$. Adding this to \eqref{eq:iqcDIEglob} gives
\begin{align*}
\dot{V} (\eta,y,d) + \sum_{i=1}^N \dot{V}_i(x_i, \eta_i, u_i, y_i)  \leq z_\textsc{w}^\tp W z_\textsc{w} 
\end{align*} 
certifying the interconnected system satisfies the global IQC $\Pi_\textsc{w}$.
\end{proof}
For computational considerations it is useful to understand the size of the LMI in \eqref{def:GIQC}. Suppose each subsystem has $n_y$ outputs and $n_u$ inputs and 
\begin{align}
\Psi_i(s) = \bmat{\psi^0_i(s)  \\ \vdots \\ \psi^{n_b}_i(s)}\otimes I_{n_y+n_u}
\end{align}
where $\otimes$ is the Kronecker product and $\psi_i$ are scalar rational functions of degree $q$. The state dimension of $\Psi$ (and matrix dimension of $P$) is $N n_b(n_u+n_y)q$ and the dimension of each $X_i$ is $(n_b+1)(n_u+n_y)$. For example $N=30$, $q=1$, $n_u=n_y=1$, $n_b = 5$ results in $P =P^\tp \in \mathbb{R}^{300 \times 300}$, each $X_i = X_i^\tp \in \mathbb{R}^{12 \times 12}$, and 47,490 decision variables, which is manageable with current SDP solvers.

\section{ADMM}
\label{sec:decomp}
The ADMM algorithm~\cite{boyd11} allows us to decompose \eqref{eq:primal} into local subproblems and a global problem involving only the supply rates and interconnection. ADMM is used to solve problems of the form
\begin{equation}\label{eq:admm_canonical_form}\begin{aligned}
\text{minimize} \qquad & f(x) + g(z) \\
\text{subject to} \qquad & Ax+Bz = c
\end{aligned}\end{equation}
where $x$ and $z$ are vector decision variables and $f$ and $g$ are extended real valued functions. The scaled ADMM algorithm is given by
\begin{align*}
x^{k+1} &= \argmin_x f(x) + \frac{\rho}{2} \|Ax+Bz^k-c+\lambda^k \|_2^2 \\
z^{k+1} &= \argmin_z g(z) + \frac{\rho}{2} \|Ax^{k+1}+Bz-c+\lambda^k \|_2^2 \\
\lambda^{k+1} &= Ax^{k+1}+Bz^{k+1}-c+\lambda^k
\end{align*}
where $\lambda$ is a scaled dual variable. The regularization parameter $\rho$ is a free parameter that typically effects the convergence rate of ADMM. However, for feasibility problems, where $f$ and $g$ are indicator functions, this parameter has no effect.

The feasibility problem~\eqref{eq:primal} may be put into this form by defining the following indicator functions:
\begin{align*}
\mathbb{I}_{\mathcal{L}_i}(X_i) &:=
	\begin{cases}
		0 &   X_i\in \mathcal{L}_i \\ 
		\infty & \text{otherwise}
	\end{cases} \\
\mathbb{I}_{\mathcal{G}}(X_1, \dots, X_N) &:=
	\begin{cases}
		0 & (X_1,\dots,X_N)\in\mathcal{G} \\ 
		\infty & \text{otherwise}
	\end{cases}
\end{align*}
and introducing the auxiliary variable $Z_i$ for each subsystem. This allows us to rewrite~\eqref{eq:primal} as an optimization problem in the canonical form~\eqref{eq:admm_canonical_form}:
\begin{align} 
\label{eq:admmForm}
\begin{split} 
\underset{(X_{1:N}, Z_{1:N})}{\text{minimize}} \quad & \sum_{i=1}^N \mathbb{I}_{\mathcal{L}_i}(X_i) +
						\mathbb{I}_{\mathcal{G}}(Z_1,\dots,Z_N) \\
\text{subject to}\quad & X_i - Z_i = 0\quad\text{for }i=1,\dots,N
\end{split}
\end{align}
where $X_{1:N} := (X_1, \dots, X_N)$. The transformation of~\eqref{eq:primal} to \eqref{eq:admmForm} is commonly used in optimization theory~\cite{boyd11} to allow decomposition of problems. Since the first term in the objective and the constraints are separable for each subsystem, the ADMM algorithm takes on the following parallelized form.
\begin{enumerate}
\item $X$-updates: for each $i$, solve the local problem:
\begin{align*}
X_i^{k+1} = \argmin_{X \in \mathcal{L}_i} 
\bigl\| X - Z_i^k + \Lambda_i^k \bigr\|_F^2
\end{align*}
\item $Z$-update: if $X_{1:N}^{k+1} \in \mathcal{G}$, then we have found a solution to \eqref{eq:primal}, so terminate the algorithm. Otherwise, solve the global problem:
\begin{align*}
Z^{k+1}_{1:N} = \argmin_{Z_{1:N}\in\mathcal{G}}  \sum_{i=1}^N \left\|X_i^{k+1} - Z_i + \Lambda_i^k\right\|_F^2
\end{align*} 
\item $\Lambda$-update: Update $\Lambda$ and return to step 1.
\[
\Lambda_i^{k+1} = X_i^{k+1} - Z_i^{k+1} + \Lambda_i^k
\]
\end{enumerate}

Figure~\ref{fig:ADMMdiagram} depicts the parallelizable nature of the ADMM algorithm. The local SDP or SOS problems are solved independently to determine the supply rate $X_i \in \mathcal{L}_i$ that is closest, in the Frobenius norm sense, to $\Lambda_i-Z_i$. These supply rates are then passed to the global SDP problem to determine the supply rates $(Z_1,\dots, Z_N) \in \mathcal{G}$ that are closest to $X_i+\Lambda_i$. The $\Lambda$-update then plays a role analogous to integral control to drive $X_i - Z_i$ to zero.

\begin{figure}[ht]
\centering
\begin{tikzpicture}[thick,auto,>=latex,node distance=2cm]

\def\s{0.6}
\def\SP{\s*2.3cm} 
\def\LWD{\s*3.2cm} 
\def\HT{\s*1.2cm} 
\def\WD{\s*10.5cm} 
\def\DX{\s*0.5cm} 

\tikzstyle{block}=[draw,rectangle,inner sep=2mm,node distance=\SP,minimum height=\HT]

\node[block,minimum width=\WD](M){$(Z_1, \dots, Z_N) \in \mathcal{G}$};
\coordinate (Mwest) at (M.south west);
\coordinate (Mmid) at (M.south);
\coordinate (Meast) at (M.south east);
\node[block,minimum width=\LWD,below of=Mwest,anchor=west](C1){$X_1 \in \mathcal{L}_1$};
\node[block,minimum width=\LWD,below of=Meast,anchor=east](CN){$X_N \in \mathcal{L}_N$};
\node[block,draw=none,below of=Mmid](Cdots){$\dots$};

\draw[->](C1.north)+(\DX,0) node (tt){} -- node[swap]{\small$X_1$} (tt |- M.south);
\draw[<-](C1.north)+(-\DX,0) node (tt){} -- node{\small$Z_1$} (tt |- M.south);

\draw[->](CN.north)+(\DX,0) node (tt){} -- node[swap]{\small$X_N$} (tt |- M.south);
\draw[<-](CN.north)+(-\DX,0) node (tt){} -- node{\small$Z_N$} (tt |- M.south);

\end{tikzpicture}
\caption{The parallelizable nature of ADMM where the local supply rates $X_i$ are updated individually based on the subsystem properties and the global supply rates $Z_i$ are updated simultaneously. \label{fig:ADMMdiagram}}
\end{figure}
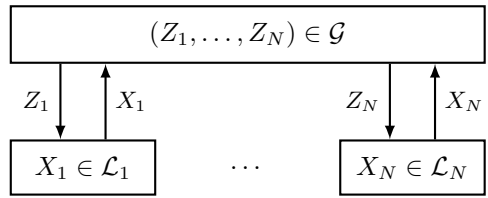

\paragraph{ADMM Convergence}The epigraph of $f$ is defined as
\if\MODE(2)
\vspace{-10 pt}
\fi
\begin{align*}
\operatorname{epi}(f) := \left\{(x,t) \in \mathbb{R}^{n+1} | f(x) \leq t \right\}.
\end{align*} 
An extended real-valued function  $f:\mathbb{R}^n \rightarrow \mathbb{R}\cup\{\infty\}$ is closed, convex, and proper if and only if the epigraph of $f$ is closed, convex, and nonempty. If the extended real-valued functions $f$ and $g$ in \eqref{eq:admm_canonical_form} are closed, proper, and convex and the Lagrangian has a saddle point then as $k \rightarrow \infty$ the objective $f(x^k) + g(z^k)$ converges to the optimal value, the dual variable $v^k$ converges to the dual optimal point $v^\star$, and the residual $Ax+Bz-c$ converges to zero~\cite{boyd11}. Furthermore, if $A$ and $B$ are full column rank then the decision variables $x^k$ and $z^k$ are guaranteed to converge to $x^\star$ and $z^\star$ as $k \rightarrow \infty$~\cite{mota11}. 

 Since the local and global constraint sets in~\eqref{eq:primal} are convex and assumed to be nonempty the indicator functions $\mathbb{I}_{\mathcal{L}_i}$ and $\mathbb{I}_{\mathcal{G}}$ are closed, convex, and proper. If the intersection of the local and global constraint sets is non-empty then a feasible point $(X^\star, Z^\star)$ exists. By Slater's condition strong duality holds. Therefore, there exists a dual optimal point $\Lambda^\star$ such that the Lagrangian has a saddle point~\cite{Bazaraa06}. Therefore, for our application ADMM is guaranteed to find a feasible point as $k \rightarrow \infty$. A feasible point is typically found in a finite number of iterations, but if the interior of the feasible set is empty the algorithm may asymptotically approach a feasible point and reach it only in the limit.
 
Other algorithms, such as alternating projections, Dykstra's method, and dual decomposition with the subgradient method, can also be used to solve \eqref{eq:primal}. A detailed comparison in~\cite{acc14} shows ADMM is significantly more reliable and typically faster than these approaches.

\section{Examples}
\label{sec:Ex}
\vspace{-8pt}
\paragraph{\bf Skew-symmetric interconnection structure} For this example $50$ LTI subsystems of the following form were generated:
\begin{align*}
G_i: \left\{ \quad 
\begin{aligned}
\dot x_i &= \bmat{-\epsilon_i & 1 \\ -1 & -\epsilon_i}x_i + \bmat{0 \\ 1}u_i \\
y_i &= \bmat{0 & 1} x_i
\end{aligned} \right.
\end{align*}
The decay rate $\epsilon_i$ was chosen from a uniform distribution over $\bmat{0,0.1}$. Each subsystem is passive, but has large $L_2$ gain due to the small decay rates $\epsilon_i$. The interconnection matrix $M$ is skew-symmetric with the following form:
\begin{align*}
M = \bmat{0 & M_0 \\ -M_0^\tp & 0}
\end{align*}
where each element of $M_0$ was chosen randomly from a standard normal distribution. This  structure arises in communication networks and multiagent systems~\cite{murat07, murat10, murat04}. Skew-symmetry of $M$ along with the passivity of the subsystems guarantees stability without any restriction on the $L_2$-gain of the subsystems or the spectral norm of $M$. In contrast, a large norm (e.g. due to the size of $M$) and the subsystem gains prevent stability certification by the small-gain theorem.

The formulation was tested on 100 random instances of the skew-symmetric interconnected system. In all instances, the ADMM algorithm certified stability in at most 65 iterations and 90\% of cases required fewer than 47 iterations (see Figure~\ref{fig:NumItersSS}).  Although we structured this example such that certification is possible with passivity, we did not bias the algorithm with this prior knowledge, and demonstrated its ability to converge to a narrow feasible set in a large-scale interconnection.  We emphasize, however, that the main interest in the algorithm is when useful structural properties of the interconnection and compatible subsystem properties are not apparent to the analyst.

\begin{figure}[ht]
	\centering
	\includegraphics{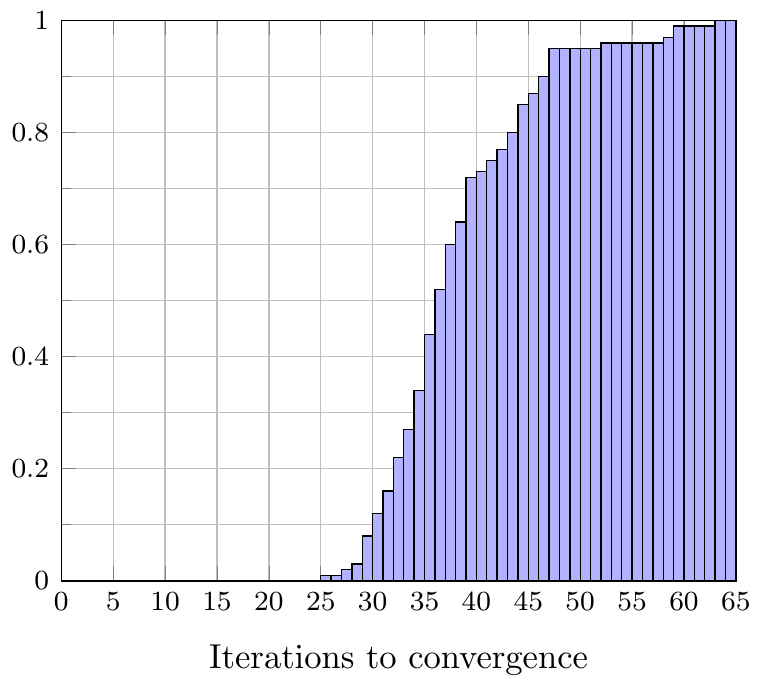}
	\caption{Cumulative plot showing the fraction of 100 total trials that required at most a given number of iterations to certify dissipativity using ADMM.
	\label{fig:NumItersSS}}
\end{figure}

\paragraph{\bf Large-scale rational polynomial system}
A system consisting of $N$, 2-state rational polynomial subsystems was generated. The subsystems are:
\begin{equation}
H:  \left\{ \hspace*{0.2in}
 \begin{aligned} \label{eq:ratdyn}
\dot{x}_{1} &= x_{2} \\
\dot{x}_{2} &= \frac{-a x_{2}-b x_{1}^3+u}{1+cx_{2}^2} \\
y &= x_{2}
\end{aligned} \right.
\end{equation}
where $a,b,c >0$ are parameters of the subsystem. The positive-definite storage function $V(x) := \frac{ab}{2} x_{1}^4 + \frac{ac}{2}x_{2}^4 + ax_2^2$ and supply rate $w(u,y) := u^2 - a^2 y^2$ certify that the $L_2$-gain of $H$ is less than or equal to $a^{-1}$.  

Clearly, {\it any} interconnection of these subsystems, with an interconnection matrix whose spectral norm is less than $a$, will have $L_2$-gain less than 1. This insight allows us to construct large-scale examples as described by the following steps:
\begin{enumerate}
\item
Choose $\left\{a_i, b_i, c_i \right\}_{i=1}^N$ uniformly distributed in $(1,2)\times(0,1)\times(0.5,2)$.
These constitute the parameters of system $H_i$.   Denote $\gamma := \max_{i} a_i^{-1}$.
\item
Choose each entry of $S \in {\mathbb{R}}^{m\times p}$ from a standard normal distribution. 
\item
Compute $\beta := \inf_B \bar{\sigma}(BMB^{-1})$ where $B = \diag(b_1, \dots, b_N, I_d)$, $b_i > 0$ for $i =1,\dots, N$. Redefine $S:= \frac{0.99}{\gamma \beta}S$ to guarantee the spectral norm of the interconnection is less than $\gamma$.
\item
Choose random nonzero, diagonal scalings $\Psi = \diag(\Psi_1,\dots,\Psi_N)$ 
and $\Phi = \diag(\Phi_1,\dots,\Phi_N)$. 
\item
Define $G_i := \Phi_i H_{i} \Psi_i$, and
\[
M :=
\left[ \begin{array}{cc} \Psi^{-1} & 0 \\ 0 & I_d \end{array} \right]
S 
\left[ \begin{array}{cc} \Phi^{-1} & 0 \\ 0 & I_d \end{array} \right].
\]
\end{enumerate}
The scalings introduced in step 4 alter the gain properties of the subsystems and interconnection disguising the simple construction that guarantees the $L_2$-gain of the interconnected system is less than 1. Figure ~\ref{fig:Scalings} below illustrates the interconnection that the algorithm must attempt to certify.

 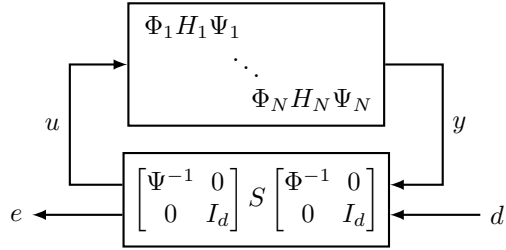
\begin{figure}[ht]
 \centering
 \begin{tikzpicture}[thick,auto,>=latex,node distance=1.8cm]
 \tikzstyle{block}=[draw,rectangle,minimum width=34mm, minimum height=2.5em]
 \def\x{0.7}		
 \def\y{0.2}		
 \def\z{1.2}		
 \node [block,inner ysep=2mm](G){$\addtolength{\arraycolsep}{-0.3em}\bmat{\Psi^{-1}&0 \\0 & I_d}S\bmat{\Phi^{-1}& 0\\0 & I_d}$};
 \node [block,above of=G](D){$\addtolength{\arraycolsep}{-0.7em}\begin{matrix}
 \,\Phi_1{H}_1\Psi_1 & &\\[-0.15em]&\ddots&\\[-0.15em]& & \Phi_N{H}_N\Psi_N \end{matrix}$};
 \draw [<-] (G.east)+(0,\y) -- +(\x,\y) |- node[swap,pos=0.25]{$y$} (D);
 \draw [<-] (G.east)+(0,-\y) -- +(\z,-\y) node [anchor=west]{$d$};
 \draw [->] (G.west)+(0,\y) -- +(-\x,\y) |- node[pos=0.25]{$u$} (D);
 \draw [->] (G.west)+(0,-\y) -- +(-\z,-\y) node [anchor=east]{$e$};
 \end{tikzpicture}
 \caption{Scaling of the interconnected system of Figure~\ref{fig:IntSysIO}.\label{fig:Scalings}}
 \end{figure}

We generated 200 random instances of this interconnected system, each with $N=100$ subsystems. The ADMM algorithm was applied to certify the $L_2$-gain of the interconnection is less than or equal to 1. SOS programming was used to search for quartic storage functions to certify dissipativity of the subsystems. Each storage function consists of all monomials up to degree 4. The algorithm succeeded for all 200 tests, requiring at most 48 iterations and less than 15 for 90\% of the tests.

Using this example we also were able to test the performance of our method compared to directly searching for a separable storage function. Both polynomial and rational polynomial systems with different numbers of subsystems were tested. The subsystems were generated as described above. For the polynomial subsystems the coefficient $c$ in~\eqref{eq:ratdyn} was set to~$0$. Since the number of iterations for the ADMM algorithm may vary, 100 tests for each system size were performed. Figure~\ref{fig:admmTimeComp} shows the average time for the ADMM algorithm to find a solution compared to the time to directly search for a separable storage function.

 \begin{figure}[h!]
    \centering
      \includegraphics{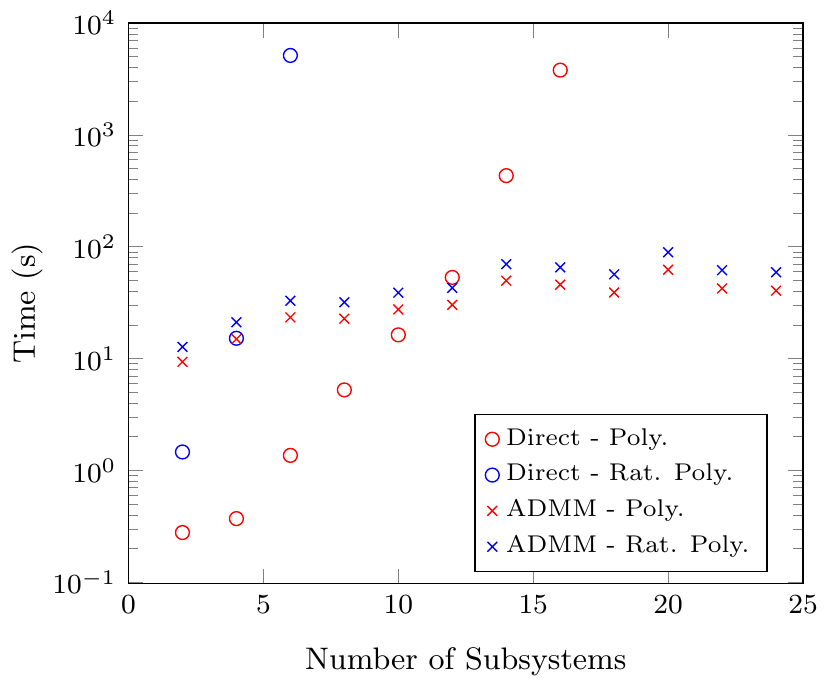}
      \caption{Runtime of the proposed method using ADMM compared to directly finding a separable storage function.} \label{fig:admmTimeComp}
  \end{figure}
 
 As can be seen for large systems the ADMM algorithm outperforms conventional approaches. Directly searching for a separable storage function became computationally intractable for systems with more than 16 polynomial subsystems and more than 6 rational subsystems, while the ADMM algorithm has been used to certify properties for systems with 200 rational subsystems.

\paragraph{\bf Vehicle platoon}
In this example we analyze the $L_2$-gain properties of a vehicle platoon~\cite{burger13,coogan14}. For a platoon with $N$ vehicles the dynamics of the $i^\text{th}$ vehicle can be described by
\begin{equation*}
\Sigma_i: \left\{
\begin{aligned}
\quad \dot{v}_i &= -v_i + v_i^{\textup{nom}} + u_i \\
y_i &= v_i 
\end{aligned} \right.
\qquad
i = 1,\dots,N
\end{equation*}
where $v_i(t)$ is the vehicle velocity and $v_i^{\textup{nom}}$ is the nominal velocity. In the absence of a control input $u_i(t)$ each vehicle tends to its nominal velocity.

\begin{figure}[ht]
\centering
\begin{tikzpicture}[thick,auto,>=latex,node distance=2cm]
\tikzstyle{dpath}=[dashed,very thick,draw=red]
\def\s{0.6}
\def\car(#1)#2#3{
  \begin{scope}[shift={(#1)}]
    \def\w{\s*1cm};  \def\h{\s*0.5cm}; \def\tx{\s*0.15cm}
    \def\th{\s*0.12cm}; \def\tw{\s*0.5cm}
    \draw (-\w,-\h) rectangle (\w,\h);
    \draw (-\w+\tx,-\h) rectangle (-\w+\tx+\tw,-\h-\th);
    \draw (-\w+\tx,\h) rectangle (-\w+\tx+\tw,\h+\th);
    \draw (\w-\tx,-\h) rectangle (\w-\tx-\tw,-\h-\th);
    \draw (\w-\tx,\h) rectangle (\w-\tx-\tw,\h+\th);
    \draw (0,-1.1*\h-0.1) -- (0,-2.5*\h);
    \draw[->,swap,pos=0.5] (0,-2.0*\h) -- node {$x_{#3}$}  (1,-2.0*\h);
    \node[minimum width=2*\w, minimum height=\w+2*\th] (#2) at (0,0) {$#3$};
  \end{scope}
}
\car(0,0){A}{3}
\car(3,0){B}{2}
\car(6,0){C}{1}
\path [dpath] (A) -- (B) -- (C);
\end{tikzpicture}
\caption{Vehicle platoon. Each vehicle measures the relative distance of all vehicles connected to it by a dotted line.\vspace{1mm} \label{fig:PlatoonEx}}
\end{figure}

Each vehicle uses the relative distance between itself and a subset of the other vehicles to control its velocity. The subsets are represented by a connected, bidirectional, acyclic graph with $L$ links interconnecting the $N$ vehicles. In Figure~\ref{fig:PlatoonEx}, the links are shown as dotted lines. Letting $p_\ell$ be the relative displacement between the vehicles connected by link $\ell$ gives $\dot{p}_{\ell} = v_i-v_j$ where $i$ is the leading node and $j$ is the trailing node. We define $D \in \mathbb{R}^{N \times L}$ as 
\begin{equation*}
D_{i\ell} = \left\{ \begin{array}{ll} 1 & \text{if $i$ is the leading node of edge $\ell$}\\
-1 & \text{if $i$ is the trailing node of edge $\ell$}\\ 
0 &\text{otherwise.} \end{array} \right.
\end{equation*}
Thus, $D$ maps the velocities of the vehicles to the relative velocities across each link: $\dot{p} = D^\tp v$.

 We consider a set of control laws for velocity regulation that encompass those presented in~\cite{burger13,coogan14}:
\vspace{-12 pt}
\begin{align*}
u_i = -\sum_{\ell=1}^LD_{i\ell}\phi_\ell(p_{\ell})
\end{align*}
where $\phi_\ell:\mathbb{R}\to\mathbb{R}$ can be any function that is increasing and surjective, ensuring the existence of an equilibrium point~\cite{burger13}. Defining $\Phi := \diag(\phi_1, \dots, \phi_L)$, we represent the system as the block diagram in Figure~\ref{fig:PlatoonBD}.

 \begin{figure}[ht]
 \centering
 \begin{tikzpicture}[thick,auto,>=latex,node distance=2cm]
 \tikzstyle{block}=[draw,rectangle,minimum width=8mm, minimum height=2em]
 \def\x{0.7}		
 \def\y{0}		
 \def\z{1.2}		
  \node [block,inner xsep=1.5mm, inner ysep=1.5mm](G){$\addtolength{\arraycolsep}{-0.3em}\begin{matrix}
  \Sigma_1 & &\\[-1.5mm] & \ddots & \\[-1mm] & & \Sigma_N \end{matrix}$};
 \node [draw=none,fill=none, below of=G,node distance=8mm](M){};
 \node [block, left of=M, node distance=20mm](D){$ -D$};
 \node [block, right of=M, node distance=20mm](DT){$ D^\tp$};
 \node [draw=none,fill=none, below of=G](N){};
 \node [block, right of=N, node distance=8mm](S){$\int$};
 \node [block, left of=N, node distance=8mm](P){$\Phi$};
 
 \draw [->] (G.east)+(0,\y) -- +(0,\y) -| node[pos=0.25]{$v$} (DT);
 \draw [->] (D.north)+(0,\y) -- +(0,\y) |- node[pos=0.75]{$u$} (G);
 \draw [->] (DT.south)+(0,-\y) -- +(0,-\y)|- node[pos=0.3]{$\eta$} node[pos=0.8]{$\dot{p}$}  (S);
 \draw [->] (S) -- node[pos=0.5]{$p$} (P);
 \draw [->] (P.west)+(0,\y) -- +(0,\y) -| node[pos=0.7]{$z$}  (D.south);
 \path (P.north west)+(-.3, .25) coordinate (R1);
 \path (S.south east)+(.6,-.25) coordinate (R2);
 \node [draw = none, fill=none, left of=R2, node distance=38mm] (lam){};
 \draw[dashed] (R1) rectangle (R2);
 \node[anchor=south west, xshift = 3.8cm, yshift = -0.2cm] at (lam) {$\Lambda$};
  \end{tikzpicture}
 \caption{Block diagram of the vehicle platoon dynamics. \label{fig:PlatoonBD}}
 \end{figure}
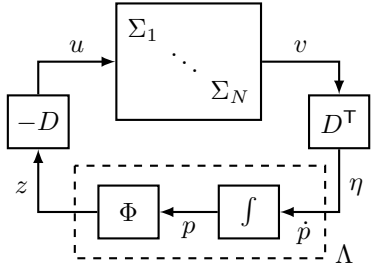
 
 The map $\Lambda$ from $\eta$ to $z$, indicated by a dashed box in Figure~\ref{fig:PlatoonBD}, is diagonal; each $\dot p_\ell$ is integrated and the corresponding $\phi_\ell$ is applied. Thus, we define $\Lambda := \diag(\Lambda_1, \dots, \Lambda_L)$, where $\Lambda_\ell$ is
 \begin{equation*}
 \Lambda_\ell: \left\{
 \begin{aligned}
 \quad \dot{p}_\ell &= \eta_\ell \\
 z_\ell &= \phi_\ell(p_\ell)
 \end{aligned}
 \qquad \ell = 1,\dots,L \right.
 \end{equation*}
 with input $\eta_\ell$ and output $z_\ell$. 
 By diagonally concatenating  $\Sigma := \diag(\Sigma_1, \dots, \Sigma_N)$ with $\Lambda$ we can transform this system into Figure~\ref{fig:PlatoonTrans}.  

 \begin{figure}[ht]
 \centering
 \begin{tikzpicture}[thick,auto,>=latex,node distance=1.8cm]
 \tikzstyle{block}=[draw,rectangle,minimum width=20mm, minimum height=2.5em]
 \def\x{0.7}		
 \def\y{0}		
 \def\z{1.2}		
 \node [block,inner ysep=2mm](G){$\addtolength{\arraycolsep}{-0.3em}\bmat{0 & -D \\ D^\tp & 0}$};
 \node [block,above of=G](D){$\addtolength{\arraycolsep}{-0.3em}\bmat{
 \Sigma & 0 \\ 0 & \Lambda}$};
 
 \draw [<-] (G.east)+(0,\y) -- +(\x,\y) |- node[swap,pos=0.25]{$\bmat{v \\ \phi(p)}$} (D);
 \draw [->] (G.west)+(0,\y) -- +(-\x,\y) |- node[pos=0.25]{$\bmat{u \\ \dot{p}}$} (D);
 \end{tikzpicture}
 \caption{Figure~\ref{fig:PlatoonBD} transformed into the form of Figure~\ref{fig:IntSysIO}. \label{fig:PlatoonTrans}}
 \end{figure}
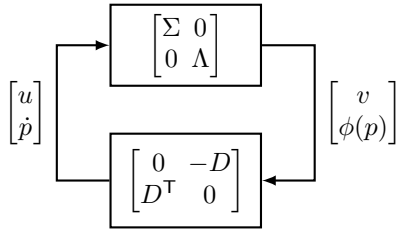

An equilibrium $(v^\star, p^\star)$ is guaranteed to exist, but it depends on the unknown functions $\phi_\ell$. Therefore, we will exploit the EID properties of the subsystems to establish the desired global property without explicit knowledge of the equilibrium. 

For each $\Lambda_\ell$ subsystem the dissipativity properties depend on the unknown $\phi_\ell$ function. However, it is not difficult to show that $\Lambda_i$ is EID with respect to the following supply rate 
\begin{equation}\label{eq:passivity}
\bmat{\eta_\ell-\eta_\ell^\star \\ z_\ell-z_\ell^\star}^\tp \bmat{0 & 1 \\ 1  & 0} \bmat{\eta_\ell-\eta_\ell^\star \\ z_\ell-z_\ell^\star}.
\end{equation} 
This property can be proven by using the storage function
\begin{equation*}
V_\ell(p_\ell) = 2\int_{p_\ell^\star}^{p_\ell} [\phi_{\ell}(\theta) - \phi_\ell(p_\ell^\star)]\,d\theta
\end{equation*}
and the property $(p_\ell-p_\ell^\star) [\phi_\ell(p_\ell)-\phi_\ell(p_\ell^\star)] \geq 0$ which follows because $\phi_\ell$ is increasing. Therefore, instead of searching over supply rates for the $\Lambda_\ell$ subsystems in the ADMM algorithm, we fix~\eqref{eq:passivity} as the supply rate so that the algorithm does not rely on the $\phi_\ell$ functions or their associated equilibrium points.

For the simulation, we used $N=20$ and each vehicle's nominal velocity was randomly chosen. A linear topology was used as in Figure~\ref{fig:PlatoonEx}. That is, each vehicle measures the distance to the vehicle in front of it and the vehicle behind it. We investigated how a force disturbance applied to the trailing vehicle would affect the velocity of the lead vehicle. Specifically, we augmented the interconnection matrix $M$ (see Figure~\ref{fig:IntSysIO}) such that the disturbance $d$ is applied to the last vehicle:
\begin{align*}
\dot{v}_N = -v_N + v_N^{\textup{nom}} + u_N+d
\end{align*} 
and the output $e$ is the velocity of the first vehicle $v_1$. We then certified the $L_2$-gain from $d$ to $e$ is no greater than $\gamma$ using the supply rate 
\begin{align*}
\bmat{d \\ e-e^\star}^\tp \bmat{ 1 & 0 \\ 0 & -\gamma^{-2} } \bmat{d \\ e-e^\star}.
\end{align*}
 A bisection search was used to find that $\gamma_\textup{min} = 0.71$ was the smallest value that could be certified. Since our method searches over a restricted class of possible storage functions it may be conservative. To bound this conservatism, we performed an ad-hoc search over linear $\phi_\ell$ functions, seeking a worst-case $L_2$-gain. The result was that $\gamma_\textup{min} \geq 0.49$.

In this problem the interior of the feasible set is empty. Letting $X^{jk}$ be the entry in the $j$-th row and $k$-th column of $X$, the local subproblems have the constraints $X^{11}_i \geq 0$ for $i=1,\dots,N$ (these are scalar variables), while the global problem has the constraint $D^\tp \!\diag(Z^{11}_1,\dots,Z^{11}_N) D \leq 0$. The only solution that satisfies both of these constraints is $X^{11}_i=Z^{11}_i=0$ for all $i$. This further implies that $X^{12}_i=X^{21}_i=1$ for all $i$. The intersection of the $\mathcal{G}$ and $\mathcal{L}_i$ sets having an empty interior results in the ADMM algorithm oscillating between $X^{11}_{i} > 0$ for the local problems and $Z^{11}_i < 0$ for the global problem, leading to slow convergence and finding a feasible solution (i.e. $X^{11}_{i} =Z^{11}_{i}= 0$) in the limit. 

The difficulties arising from hidden equality constraints in SDP problems are well known and there are procedures for automatically detecting these~\cite{parillo14}. Unfortunately, it is not clear how to apply these ideas here, because the equality constraint is only present when the local and global constraints are considered simultaneously. We addressed this issue by setting $X_{i}^{11}=0$ and $X_i^{12}=X_i^{21}=1$, effectively removing those variables from the optimization. The feasible region of the resulting problem has a nonempty interior, and the ADMM algorithm converges in a few iterations.

\paragraph{\bf IQC example}

This example, while very simple, demonstrates the advantage of using IQCs instead of dissipativity. Consider two subsystems $G_1$ and $G_2$ interconnected by $M$:
\begin{align*}
G_1(s) = \frac{1}{s+1} \quad
G_2(s) = \frac{2}{5s+1} \quad M = \bmatshort{0 & -1 & 1 \\ 1 & 0 & 0 \\ 1 & 0 & 0}.
\end{align*}

The $L_2$-gain of the interconnected (linear) system is approximately $0.862$. 
However, using duality certificates, one can show that the dissipativity formulation {\it cannot} certify that the $L_2$-gain of the interconnected system is less than 1.   By contrast, with the IQC formulation, and
\begin{align*}
\Psi(s) = \bmat{\psi_0(s) \\ \vdots \\ \psi_K(s)}\otimes I_2 \text{ where } \psi_k(s) = \left(\frac{1}{s+2}\right)^k
\end{align*}
and $K=3$, the ADMM algorithm certifies the $L_2$-gain of the interconnected system is less than or equal to $0.863$. 

\par All simulations were performed in MATLAB using the SOSOPT toolbox \cite{sosopt} for SOS programs and the CVX toolbox \cite{cvx} for SDP problems.

\section{Conclusion}
A compositional approach to performance certification of large interconnected systems was presented. This approach is less conservative than conventional techniques because it searches for the subsystem properties that are most beneficial in certifying the performance of the interconnection. For the linear case, we have shown this approach is equivalent to searching for a separable storage function. ADMM is used to decompose this problem enabling certification of much larger problems than conventional techniques allow.

\begin{small}
\bibliographystyle{abbrv}
\bibliography{autobib}
\end{small}

\newpage
\section*{Appendix: Proof of Theorem \ref{thm:equivLin}}
\label{sec:app}

(\ref{cond2})$\implies$(\ref{cond1}) follows by specializing Proposition~\ref{prop:generalized_inequality} to linear subsystems and quadratic storage functions of the form $V_i(x_i) = x_i^\tp P_i x_i$. Then, the dissipation inequality~\eqref{eq:diss} implies condition (\ref{cond1}).

(\ref{cond1})$\implies$(\ref{cond2}): Condition (\ref{cond1}) is equivalent to the existence of $P_i \succeq 0$ for $i=1,\dots,N$ such that
\begin{equation} \label{eq:LTIdiss}
\hspace{-3mm}\sum_{i=1}^N \bmat{x_i\\u_i}^\tp\!\! \addtolength{\arraycolsep}{-3pt}\bmat{A_i^\tp P_i+P_iA_i & P_iB_i \\ B_i^\tp P_i & 0 }\!\bmat{x_i \\ u_i} \leq \bmat{d \\ e}^\tp\!\! W \bmat{d \\ e}
\end{equation}
for all $x$ and $d$, where $u$ and $e$ are expressed in terms of $x$ and $d$ using~\eqref{eq:M_interconnection} and~\eqref{subsysb}.

Defining $V_i(x_i) := x_i^\tp P_i x_i$, the $i^\text{th}$ summand on the left-hand side of~\eqref{eq:LTIdiss} is  $\dot V(x_i, u_i):=\nabla V_i(x_i)^\tp f_i(x_i, u_i)$. We will prove the $V_i$ are storage functions that certify local dissipativity of the subsystems. We assume without loss of generality that for each subsystem, $C_i= \bmatshort{\mathbf{0}_{m_i \times (n_i-m_i)} & \mathbf{I}_{m_i \times m_i}}$. This allows each $x_i$ to be partitioned as $x_i=\bmatshort{z_i \\ y_i}$ where $z_i \in \mathbb{R}^{n_i-m_i}$ and $y_i \in \mathbb{R}^{m_i}$. Eliminating $x_i$ from \eqref{eq:LTIdiss} and rearranging, we obtain
\if\MODE1
\begin{align}\label{eq:DIElin}
\sum_{i=1}^N \left(z_i^\tp  Q_iz_i+2z_i^\tp R_i \bmat{ u_i\\ y_i} +\bmat{ u_i \\ y_i}^\tp S_i \bmat{ u_i\\ y_i} \right) \leq \bmat{d \\ e}^\tp W \bmat{d \\ e} \quad \text{ for all $z,y,d$}
\end{align}
\else
\begin{align}\label{eq:DIElin}
\sum_{i=1}^N &\left(z_i^\tp  Q_iz_i+2z_i^\tp R_i \bmat{ u_i\\ y_i} +\bmat{ u_i \\ y_i}^\tp S_i \bmat{ u_i\\ y_i} \right) \notag \\ & \leq \bmat{d \\ e}^\tp W \bmat{d \\ e} \quad \text{ for all $z,y,d$}
\end{align}
\fi
and appropriately chosen $Q_i$, $R_i$, and $S_i$. Since~\eqref{eq:DIElin} holds for all $z,y,d$, it holds in particular when $y=d=0$. We then have from~\eqref{eq:M_interconnection} that $u=e=0$, and we conclude that $Q_i \preceq 0$. Using a similarity transform, we may assume, again without loss of generality, that $z_i$ can be decomposed as:
\begin{align*}
z_i = \bmat{\xi_i \\ \hat{z}_i}, \text{ where } z_i^\tp Q_i z_i = \hat{z}_i^\tp \hat{Q}_i \hat{z}_i \text{ and } \hat{Q}_i \prec 0.
\end{align*}
The dimensions of $\hat{z}_i$ and $\hat{Q}_i$ correspond to the number of nonzero eigenvalues of $Q_i$. Rewriting
\begin{align*}
z_i^\tp R_i\bmat{u_i \\ y_i} = \xi_i^\tp Y_i y_i +\xi_i^\tp U_i u_i+ \hat{z}_i^\tp \hat{R}_i\bmat{u_i \\ y_i}
\end{align*}
where $Y_i$, $U_i$, and $\hat{R}_i$ are appropriately defined matrices, the summands in~\eqref{eq:DIElin} take the form
\if\MODE1
\begin{equation}\label{etmp}
\dot{V}_i(\hat{z}_i, \xi_i, y_i, d) = \hat{z}_i^\tp \hat{Q}_i \hat{z}_i + 2\xi_i^\tp Y_i y_i +2\xi_i^\tp U_i u_i
+ 2\hat{z}_i^\tp \hat{R}_i\bmat{u_i \\ y_i}+ \bmat{u_i \\ y_i}^\tp S_i \bmat{u_i \\ y_i}.
\end{equation}
\else
\begin{multline}\label{etmp}
\dot{V}_i(\hat{z}_i, \xi_i, y_i, d) = \hat{z}_i^\tp \hat{Q}_i \hat{z}_i + 2\xi_i^\tp Y_i y_i +2\xi_i^\tp U_i u_i \\
+ 2\hat{z}_i^\tp \hat{R}_i\bmat{u_i \\ y_i}+ \bmat{u_i \\ y_i}^\tp S_i \bmat{u_i \\ y_i}.
\end{multline}
\fi
Because~\eqref{eq:DIElin} holds for all $z,y,d$ it must also hold if we maximize over $\hat z_i$. Performing the maximization,
\if\MODE1
\begin{equation} \label{eq:maxZ} 
\dot{V}_i(\hat{z}_i^\star, \xi_i, y_i, d) = \bmat{ u_i \\ y_i}^\tp (S_i-\hat{R}_i^\tp \hat{Q}_i^{-1} \hat{R}_i) \bmat{ u_i\\ y_i} 
+2\xi_i^\tp Y_i y_i +2\xi_i^\tp U_i u_i
\end{equation}
\else
\begin{multline} \label{eq:maxZ} 
\dot{V}_i(\hat{z}_i^\star, \xi_i, y_i, d) = \bmat{ u_i \\ y_i}^\tp (S_i-\hat{R}_i^\tp \hat{Q}_i^{-1} \hat{R}_i) \bmat{ u_i\\ y_i} \\ 
+2\xi_i^\tp Y_i y_i +2\xi_i^\tp U_i u_i
\end{multline}
\fi
where $\hat z^\star_i := \arg\max_{\hat z_i} \dot V_i(\hat z_i,\xi_i,y_i,d)$.
If we further define $X_i := S_i-\hat{R}_i^\tp \hat{Q}_i^{-1} \hat{R}_i$, we can write
\if\MODE1
\begin{equation}\label{eq:dissbar}
\sum_{i=1}^N \dot{V}_i(\hat{z}^\star_i, \xi_i, y_i, d) = \sum_{i=1}^N\bmat{u_i \\ y_i}^\tp X_i\bmat{u_i \\ y_i}
+ 2\xi^\tp\addtolength{\arraycolsep}{-2pt}\bar Y y+
2\xi^\tp\bar Uu
\end{equation}
\else
\begin{multline}\label{eq:dissbar}
\sum_{i=1}^N \dot{V}_i(\hat{z}^\star_i, \xi_i, y_i, d) = \sum_{i=1}^N\bmat{u_i \\ y_i}^\tp X_i\bmat{u_i \\ y_i} \\
+ 2\xi^\tp\addtolength{\arraycolsep}{-2pt}\bar Y y+
2\xi^\tp\bar Uu
\end{multline}
\fi
where $\bar Y := \diag(Y_1,\dots,Y_N)$ and $\bar U$ is similarly defined. Thus,
\if\MODE1
\begin{align}\label{eq:ded}
\sum_{i=1}^N \dot{V}_i(\hat{z}_i, \xi_i, y_i,d)
\le \sum_{i=1}^N \dot{V}_i(\hat{z}_i^\star, \xi_i, y_i,d)
\le \bmat{d \\ e}^\tp W \bmat{d \\ e}
\end{align}
\else
\begin{align}\label{eq:ded}
\begin{split}
\sum_{i=1}^N \dot{V}_i(\hat{z}_i, \xi_i, y_i,d)
&\le \sum_{i=1}^N \dot{V}_i(\hat{z}_i^\star, \xi_i, y_i,d) \\
&\le \bmat{d \\ e}^\tp W \bmat{d \\ e}
\end{split}
\end{align}
\fi
for all $\hat z,\xi,y,d$ where $e,u$ satisfy~\eqref{eq:M_interconnection}. Note that the the right-hand side of~\eqref{eq:ded} does not depend on $\xi$, yet its lower bound~\eqref{eq:dissbar} is linear in $\xi$. The only way this inequality can be true for all~$\xi$ is if
\begin{equation*}
2\xi^\tp\bar Yy+ 2\xi^\tp\bar Uu = 0 
\qquad\text{for all $\xi$, $y$, $d$}.
\end{equation*}
From~\eqref{eq:M_interconnection}, we have $u = M^{11}y + M^{12}d$ and so
\begin{align} \label{eq:YUconst}
\bar Y+ \bar U M^{11} = \bar U M^{12} = 0.
\end{align}

By denoting $M^{11}_{ij}$ as the submatrix of $M$ mapping $y_j \rightarrow u_i$ and $M^{12}_{j}$ as the submatrix of $M$ mapping $d \rightarrow u_i$, then for each $i$, \eqref{eq:YUconst} simplifies to
\begin{align*}
Y_i +U_iM^{11}_{ii} &= 0 \qquad \\
U_i M^{11}_{ij} &= 0 \quad \text{for $j \neq i$} \\
U_i M_j^{12} &=0 \quad \text{for $j \in 1,\dots, N$}
\end{align*} 
Assumption~\ref{assum:Mind} implies that $U_i=Y_i=0$. Therefore, \eqref{eq:maxZ} simplifies to
\begin{equation}\label{e:fin1}
\dot V_i(x_i, d) \le  \dot V_i(\hat z_i^\star,\xi_i,y_i, d) = \bmat{u_i \\ y_i }^\tp X_i \bmat{u_i \\ y_i}
\end{equation}
and hence, the storage function $V_i$ certifies dissipativity of the $i^\text{th}$ local subsystem with respect to the supply rate matrix $X_i$. Combining~\eqref{eq:ded} and~\eqref{e:fin1},
\begin{equation}\label{e:fin2}
\sum_{i=1}^N \bmat{ u_i\\ y_i}^\tp X_i \bmat{ u_i\\ y_i} \leq \bmat{d \\ e}^\tp W \bmat{d \\ e}.
\end{equation}
It follows from~\eqref{e:fin1}--\eqref{e:fin2} that each subsystem is dissipative and  $X_1, \dots, X_N$ satisfies~\eqref{def:G}. \hfill \qed
\end{document}